\pgfplotsset{compat=1.18}
\Crefname{algocf}{Algorithm}{Algorithms}
\newtheorem{theorem}{Theorem}
\newtheorem{example}{Example}
\newtheorem{definition}{Definition}
\newtheorem{lemma}{Lemma}
	\def\doi#1{\url{https://doi.org/#1}}}
\begin{document}

\title{Shortest Paths in a Weighted Simplicial Complex}

\author{\fnm{Sukrit} \sur{Chakraborty\href{https://orcid.org/0000-0003-4678-1829}{\includegraphics[height=\fontcharht\font`B]{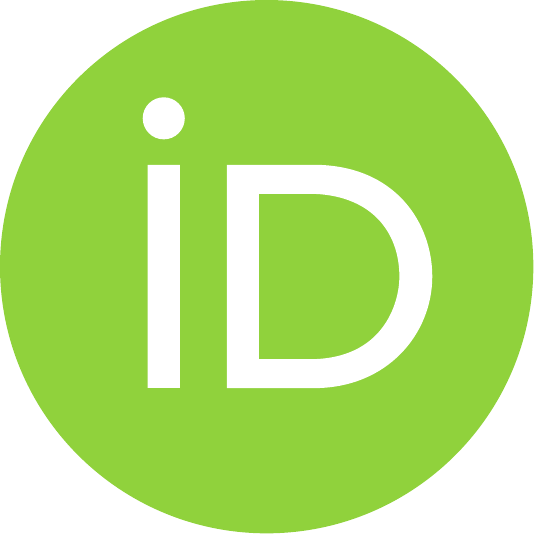}}}}\email{sukrit049@gmail.com}

\author{\fnm{Prasanta} \sur{Choudhury}}\email{prasantachoudhury98@gmail.com}

\author*{\fnm{Arindam} \sur{Mukherjee*\href{https://orcid.org/0000-0001-5505-6536}{\includegraphics[height=\fontcharht\font`B]{orcid.pdf}}}}\email{arindam@smail.iitm.ac.in}

\affil{\orgname{Achhruram Memorial College}, \orgaddress{\city{Jhalda}, \state{West Bengal}, \country{India}, \postcode{723202}}}

\abstract{Simplicial complexes are extensively studied in the field of algebraic topology. They have gained attention in recent time due to their applications in fields like theoretical distributed computing and simplicial neural networks. Graphs are mono-dimensional simplicial complex. Graph theory has application in topics like theoretical computer science, operations research, bioinformatics and social sciences. This makes it natural to try to adapt graph-theoretic results for simplicial complexes, which can model more intricate and detailed structures appearing in real-world systems. Though seemingly obvious, we did not find any previous work that looked into this prospect of simplicial complexes.

In this article, we define the concept of weighted simplicial complex and $d$-path in a simplicial complex. Both these concepts have the potential to have numerous real-life applications. We start by adapting the Depth-First Search and Breadth-First Search algorithms for our setup. Next, we provide two novel algorithms to find the shortest paths in a weighted simplicial complex. The core principles of our algorithms align with those of Dijkstra’s algorithm and Bellman-Ford algorithm for graphs. Hence, this work lays a building block for the sake of integrating graph-theoretic concepts with abstract simplicial complexes.}

\keywords{Weighted Simplicial Complex, Shortest Path, Graph Discovery Algorithms, Dijkstra's Algorithm, Bellman-Ford Algorithm}

\maketitle

% ++++++++++++++++++++++++++++++++++++++++++++++++++
\section{Introduction}
Simplicial Complexes are objects that are studied extensively in algebraic topology using tools like homology and cohomology. Simplicial complexes can capture intricate topological properties, such as holes and voids, across multiple dimensions. Simplicial complexes are widely applied in algebraic topology, computational geometry, and data analysis, particularly for capturing the shape of data (e.g., in persistent homology) and for the study of spaces with more complex topological structures. On the other hand, an abstract simplicial complex is the combinatorial counterpart of a simplicial complex. Abstract simplicial complex is often studied algebraically, thus forming a relation between combinatorics and commutative algebra. We formally define abstract simplicial complexes in the next section. 

Now a graph can be seen as a boiled down version of a simplicial complex. While a graph is typically a 1-dimensional object, an abstract simplicial complex can exist in any finite dimension. One of the main differences between a graph and a simplicial complex is that in a simplicial complex with $d \geq 2$, more than two \( (d-1) \)-dimensional simplices can be connected by a single \( d \)-dimensional simplex, whereas in a graph, an edge can connect at most two vertices. 

Graphs are commonly used in computer science, network theory, social science, and many other fields to represent relationships between entities, such as in social networks, computer networks, and transportation systems. A graph is usually analyzed in terms of its connected components, paths, and cycles.

%%%%%%%%%%%%%%%%%%%%%%%%%%%%%%%%%%%%%%%%%%%%%%%%%%%%%%%%%%%%%%%%%%%%%%%%%%%%%%%%%%%%%%%%%%%%%%%%%%%%%%%%%%%%%%%%%%%%%%%%%%%%%%%%%%%%%%%%%%%%%%%%%% 

On the other hand, Simplicial complexes have also been used in various fields as detailed in \cite{Herlihy99,Herlihy14,Koz12}. For example, in \cite{Herlihy99} and \cite{Herlihy14},  the results on simplicial complexes were used to prove new results in theoretical distributed computing. Distributed computing is a field of computer science that studies distributed systems. In \cite{Koz12}, Kozlov studied the chromatic subdivision of a simplicial complex. Chromatic subdivision plays a central role in the analysis of the solvability of tasks in different computational models in computer science (see \cite{attiya2004distributed}  for details). Simplicial neural networks (SNN) have emerged as a new direction in graph learning, extending the concept of convolutional architectures from node space to simplicial complexes, as detailed in \cite{ebli2020simplicial, chen2022bscnets, wu2023simplicial}. The reconstruction of random simplices (see \cite{adhikari2022shotgun, ophelders2025sweeping, ma2025reconstructing}) has emerged as a recent and significant trend in the study of random structures, gaining attention for its theoretical importance and potential applications.

Hence, to better equip abstract simplicial complexes to counter more real life problems, extending results from graphs to abstract simplicial complexes is a natural progression, as the latter provide a more versatile and detailed framework for modeling the complex structures observed in many real-world phenomena. Consequently, deeper insights into these complex structures and their underlying properties can be gained by extending graph-theoretic methods to simplicial complexes.

\subsection{An Example of Real-life Application of Simplicial Complexes}\label{sec:example}
  Consider a town with housing complexes \( A \), \( B \), and \( C \), where each complex contains two housing flats: \( A1, A2, B1, B2, C1, C2 \). A delivery boy is tasked with delivering parcels to each flat. The delivery boy can travel from the delivery office (denoted \( O \)) to a specific complex and deliver parcels to multiple flats within that complex.

In this setup:

\begin{itemize}
    \item Each flat (\( A1, A2, B1, B2, C1, C2 \)) and the delivery office (\( O \)) is represented as a vertex.
    \item A single edge from \( O \) to a complex \( A \) represents the delivery to both \( A1 \) and \( A2 \).
\end{itemize}

This situation is better modelled using a simplicial complex rather than a graph, as a simplicial complex allows for the representation of higher-dimensional relationships. Specifically, a simplicial complex can represent subsets of vertices (e.g., \( \{O, A1, A2\} \)) as simplices, naturally capturing the interaction of multiple entities (flats) within the same complex.

\subsection{Our Contributions}

The concept of assigning weights to edges in a graph has immense practical importance, forming the basis of many optimization problems in real-world systems. Motivated by this, in this article we introduce two fundamental generalizations to the higher-dimensional setting: (i) the notion of a \emph{weighted simplicial complex}, and (ii) the definition of a \emph{$d$-path} in such a complex. Both concepts provide the groundwork for extending classical graph-theoretic algorithms to simplicial complexes, enabling the treatment of higher-dimensional adjacency relations in a combinatorial framework. To the best of our knowledge and much to our astonishment, these concepts have not been defined in the literature before.

Building upon these foundations, our main contributions can be summarized as follows:

\begin{enumerate}
    \item \textbf{Generalized Discovery Algorithms:} We extend the classical \emph{Breadth-First Search (BFS)} and \emph{Depth-First Search (DFS)} algorithms from graphs to simplicial complexes. These algorithms, presented as \cref{alg:bfs-d} and \cref{alg:dfs-d}, can be thought of as the warm-up for the logical thought chain to our later more intricate algorithms. We also prove their correctness and analyze their time and memory complexities, both of which remain linear in the size of the complex-matching the optimal asymptotic bounds known for graphs.
    
    \item \textbf{Shortest-Path Algorithm in Weighted $d$-Complexes:} We propose a novel greedy algorithm (see \cref{alg:dijk-simplicial-complex}) for computing the shortest $d$-paths in a weighted simplicial complex. Our algorithm generalizes \emph{Dijkstra’s algorithm} to the simplicial setting while retaining its essential principles of distance relaxation and greedy vertex selection. We rigorously prove its correctness and derive its time and memory complexities, showing that the algorithm runs in $\mathcal{O}((n + m)\log n)$, where $n$ and $m$ denote the numbers of $(d-1)$- and $d$-simplices respectively.

    \item \textbf{Relaxation-Based Generalization of Bellman-Ford:} Extending beyond non-negative weights, we introduce \cref{alg:bf-d}, a Bellman-Ford-type procedure that correctly computes shortest weighted $(d-1)$-paths even in the presence of negative weights. This algorithm also detects \emph{negative $d$-cycles}, thus fully generalizing the Bellman-Ford framework to higher-dimensional combinatorial structures.

\end{enumerate}

This work bridges the gap between classical graph algorithms and higher-dimensional combinatorial structures. This offers theoretical depth and potential for real-world applications. Our constructions illustrate how simplicial complexes can serve not only as topological objects but also as computational and algorithmic entities in discrete mathematics and network science. We also give an example where our set-up can model the problem in hand in a better way than usual graphs.

\subsection{Outline}
In \cref{sec:prelim}, we provide all the notations and definitions necessary for explaining our algorithm. This section also introduces the concepts of weighted simplicial complex and $d$-path. The \cref{sec:graph-discovery} includes the description and analysis of BFS and DFS for our setting. In \cref{sec:algo} and \cref{sec:bellman} we describe our novel algorithms to find the shortest path in a simplicial complex. We provide a real-life example of our setup in \cref{sec:realexample} that shows how our setup can model some of the real-life problems better than graphs. We conclude our study with some remarks and possible future research directions in \cref{sec:conclu}.

\section{Preliminaries}\label{sec:prelim}
In this section, the notations and definitions that will form the building blocks for our algorithm will be discussed. This section also covers all the newly introduced concepts in this paper.

\subsection{The Set-up}
We start by formally defining a finite abstract simplicial complex.

\begin{definition}[Finite Abstract Simplicial Complex]
    Let \( \mathcal{X}_0 \) be a finite set. A finite abstract simplicial complex \( X \) on \( \mathcal{X}_0 \) is a collection of nonempty subsets of \( \mathcal{X}_0 \) such that:
\begin{enumerate}
    \item For every element \( x \in \mathcal{X}_0 \), the singleton \( \{x\} \) belongs to \( X \),
    \item If \( \sigma \in X \) and \( \tau \subset \sigma \), then \( \tau \in X \).
\end{enumerate}
\end{definition}

For example, \( X = \{\{1\}, \{2\}, \{3\}, \{1,2\}, \{2,3\}, \{1,3\}, \{1,2,3\}\} \) is an abstract simplicial complex on \(\mathcal{X}_{0} =  \{1,2,3\} \) (see \cref{fig:finite-abstract-simplicial-complex-example}). 
\begin{figure}[ht]
    \centering
    \scalebox{1}{\begin{tikzpicture}[scale = 1.2]

\tikzset{
    bluenode/.style={
        draw=blue,circle, inner sep =1pt
        },
    rednode/.style={
        },
}
  \node[bluenode] (a1) at (-2,0) {1};  
  \node[bluenode] (a2) at (2,0)  {2}; 
  \node[bluenode] (a3) at (0,3.46)  {3};  

  \draw (a1) -- (a2);  
  \draw (a1) -- (a3);  
  \draw (a2) -- (a3);

\filldraw[OliveGreen,opacity=.1] (-2,0) -- (2,0) -- (0,3.46) -- cycle;

\draw[->] (0,1.5) to[out=40,in=190] (2,2.5); 

\node at (2.5,2.5) {$\{1,2,3\}$};
\end{tikzpicture}  }
    \caption{Example of a finite abstract simplicial complex}
    \label{fig:finite-abstract-simplicial-complex-example}
\end{figure}
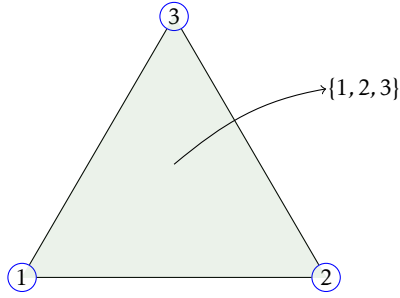

For any finite set \( A \), let \( |A| \) denote the number of elements in \( A \). A set \( \sigma \in X \) with \( |\sigma| = k+1 \) is called a \( k \)-dimensional simplex (or simply a \( k \)-simplex), where $k$ is a non-negative integer. Specifically, a vertex is a \( 0 \)-simplex, an edge is a \( 1 \)-simplex, a triangle is a \( 2 \)-simplex, and so on.

For brevity, we use the term \emph{complex} (or \emph{$d$-complex}) instead of abstract simplicial complex throughout this article. The dimension of a complex \( X \), denoted by \( \dim(X) \), is the maximum dimension of all simplices in \( X \), defined as
\[
\dim(X) := \max \{\dim(\sigma) : \sigma \in X\},
\]
where \( \dim(\sigma) = |\sigma| - 1 \). Note that if \( \dim(X) = 1 \), \( X \) can be considered as a graph whereas  \( \dim(X) = 0 \) represents a null graph.

For \( 0 \le j \le \dim(X) \), the set of all \( j \)-simplices of \( X \) is denoted by
\[
X_j := \{\sigma \in X \mid \dim(\sigma) = j\}.
\]
For the remainder of the article, for \( d \in \mathbb{N} \), the complex will take the form
\begin{equation}\label{eqn:complex}
	X := \left( \bigcup_{k=0}^{(d-1)} X_k \right) \cup X^d,
\end{equation}
where \( \mathcal{X}_0 := \{1, 2, \dots, n\} \), \( X_k := \{\{i_0, \dots, i_k\} \mid 1 \leq i_0 < \dots < i_k \leq n\} \) for \( 1 \le k \le d \), and \( X^d \subseteq X_d \). Here, \( X_k \) represents the set of all \( k \)-simplices on \( \mathcal{X}_0 \). In this model, \( X \) is a complex with a complete \( ((d-1)) \)-dimensional skeleton.

\begin{definition}[Weighted $d$-complex]
    A weighted $d$-complex \( (X,w) \) is a $d$-complex in which each member of $X^d$ is assigned a weight (or cost), typically represented as a real number. The weight of a $d$-simplex \( \tau \in X^d \) is denoted by \( w(\tau) \), where \( w: X^d \to \mathbb{R} \) is a weight function.
\end{definition}

We say that \( \sigma, \sigma' \in X_{(d-1)} \) are \emph{neighbors} if \( \sigma \cup \sigma' \in X^{d} \), in which case we write \( \sigma \sim \sigma' \) (see \cref{fig:weighted-2-complex-example}\footnote{Given a $d$-dimensional simplicial complex $X$, the figures illustrating $X$ display a selection of $(d-1)$-dimensional faces, rather than the exhaustive set of all $(d-1)$-simplices contained within $X$, in order to maintain visual concision and avoid clutter.}). This concept is similar to the one introduced in \cite{PRR2017}. 

\begin{figure}[ht]
    \centering
    \scalebox{.8}{\begin{tikzpicture}[scale = 1.2]

\tikzset{
    bluenode/.style={
        draw=blue,circle, inner sep =1pt
        },
    rednode/.style={
        },
}
  \node[bluenode] (a1) at (0,0) {1};  
  \node[bluenode] (a2) at (2,1)  {2}; 
  \node[bluenode] (a3) at (-2,1)  {3};  
  \node[bluenode] (a4) at (0,-1)  {4};
  \draw (a1) -- (a2);  
  \draw (a1) -- (a3);  
  \draw (a2) -- (a3);  
  \draw (a4) -- (a3);
  \draw (a2) -- (a4);
  \draw (a1) -- (a4);
  
\filldraw[Sepia,opacity=.1] (0,0) -- (2,1) -- (0,-1) -- cycle;

\filldraw[Periwinkle,opacity=.1] (0,0) -- (-2,1) -- (0,-1) -- cycle;

\draw[->] (.5,0) to[out=30,in=185] (3,.5); 

\node at (4,.5) {$\omega(\{1,2,3\})=5$};

\node at (4,0) { $X_{0} = \{\{1\},\{2\},\{3\},\{4\}\}$};
\node at (4,-.5) { $X_{1} = \{\{1,2\},\{1,3\},\{1,4\},\{2,3\},\{2,4\},\{3,4\}\}$};
\node at (4,-1) { $X^{2} = \{\{1,2,4\},\{1,3,4\}\}$};
\node at (4,-1.5) {Here $\{1,4\} \sim \{2,4\}$ since $\{1,2,4\}\in X^{2}$};

\draw[->] (-.5,0) to[out=150,in=-5] (-3,.5); 

\node at (-4,.5) {$\omega(\{1,3,4\})=3$};
\end{tikzpicture}  }
    \caption{Weighted 2-Complex and Neighbour}
    \label{fig:weighted-2-complex-example}
\end{figure}
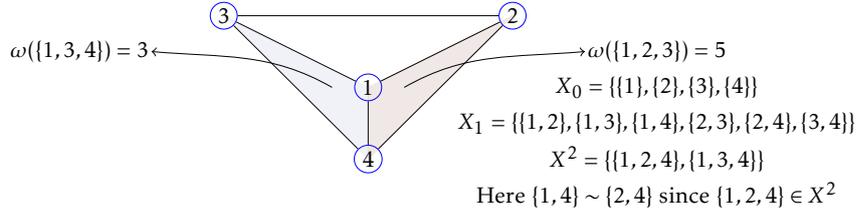

The \emph{degree} of a simplex \( \sigma \in X_{(d-1)} \) is denoted by
\[
\deg(\sigma) = \deg_X(\sigma) := \sum_{\tau \in X^d} \mathbf{1}_{\{\sigma \subset \tau\}},
\]
which counts the number of \( d \)-simplices containing \( \sigma \). Note that a simplex \( \tau \in X^d \) contributes to the sum if \( \tau = \sigma \cup \{v\} \) for some \( v \in \mathcal{X}_0 \setminus \sigma \). The set of neighbors of \( \sigma \in X_{(d-1)} \) is denoted by \( S_\sigma \), that is,
\[
S_\sigma = \{\sigma' \in X_{(d-1)} \mid \sigma' \sim \sigma\}.
\]

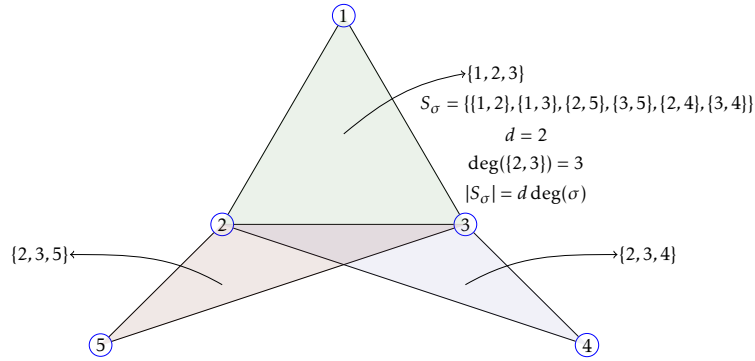
\begin{figure}[ht]
    \centering
    \scalebox{.75}{\begin{tikzpicture}[scale = 1.2]

\tikzset{
    bluenode/.style={
        draw=blue,circle, inner sep =1pt
        },
    rednode/.style={
        },
}
  \node[bluenode] (a2) at (-2,0) {2};  
  \node[bluenode] (a3) at (2,0)  {3}; 
  \node[bluenode] (a1) at (0,3.46)  {1};
  \node[bluenode] (a4) at (4,-2)  {4};
  \node[bluenode] (a5) at (-4,-2)  {5};

  \draw (a1) -- (a2);  
  \draw (a1) -- (a3);  
  \draw (a2) -- (a3);  
  \draw (a4) -- (a3);
  \draw (a2) -- (a4);
  \draw (a5) -- (a3);
  \draw (a2) -- (a5);

\filldraw[OliveGreen,opacity=.1] (-2,0) -- (2,0) -- (0,3.46) -- cycle;
\filldraw[Periwinkle,opacity=.1] (-2,0) -- (2,0) -- (4,-2) -- cycle;
\filldraw[Sepia,opacity=.1] (-2,0) -- (2,0) -- (-4,-2) -- cycle;

\draw[->] (0,1.5) to[out=40,in=190] (2,2.5); 
\draw[->] (2,-1) to[out=30,in=180] (4.5,-.5);
\draw[->] (-2,-1) to[out=150,in=0] (-4.5,-.5);

\node at (2.5,2.5) {$\{1,2,3\}$};
\node at (5,-.5) {$\{2,3,4\}$};
\node at (-5,-.5) {$\{2,3,5\}$};

\node at (3,1.5) {$d = 2$};
\node at (3,1) {$\deg(\{2,3\}) = 3$};
\node at (4,2) {$S_{\sigma} = \{\{1,2\},\{1,3\},\{2,5\},\{3,5\},\{2,4\},\{3,4\}\}$};
\node at (3,.5) {$|S_\sigma| = d \deg(\sigma)$};

\end{tikzpicture}  }
    \caption{Degree and Set of Neighbours}
    \label{fig:degree-set-of-neighbours-example}
\end{figure}

It follows that the number of elements in \( S_\sigma \) is \( d \) times \( \deg(\sigma) \), i.e.,
\[
|S_\sigma| = d \deg(\sigma).
\]

\begin{definition}[$d$-path and $d$-cycle]\label{d-path}
    A $d$-path in a $d$-complex $X$ is a sequence of distinct $(d-1)$-simplices  \( \sigma_0, \sigma_1, \dots, \sigma_l \) such that \( (\sigma_i \cup \sigma_{i+1}) \in X^d \) for all \( 0 \leq i < l \) and the collection $\{\sigma_i \cup \sigma_{i+1}: 0\leqslant i \leqslant l\}$ is a disjoint collection. The $d$-path is said to have length \( l \) and is denoted by \( P = (\sigma_0, \sigma_1, \dots, \sigma_l) \). If the $d$-path starts and ends at the same $(d-1)$-simplex, it is called a $d$-cycle of length $l$.
\end{definition}

\begin{figure}[ht]
    \centering
    \scalebox{1}{\begin{tikzpicture}

\tikzset{
    bluenode/.style={draw=blue,circle, inner sep=1pt},
}

  % Vertices
  \node[bluenode] (a1) at (-1,0) {1};  
  \node[bluenode] (a2) at (0,0)  {2}; 
  \node[bluenode] (a3) at (1,0)  {3};  
  \node[bluenode] (a4) at (1.43,-0.86) {4};  
  \node[bluenode] (a5) at (.57,-.86)  {5};  
  \node[bluenode] (a6) at (-.43,-.86)  {6};  
  \node[bluenode] (a7) at (-1.43,-.86)  {7};  
  \node[bluenode] (a8) at (-1.86,-1.7) {8};  
  \node[bluenode] (a9) at (-1,-1.7)  {9};  
  \node[bluenode] (a10) at (0,-1.7)  {10};  
  \node[bluenode] (a11) at (1,-1.7)  {11};
  \node[bluenode] (a12) at (1.86,-1.7)  {12};
  \node[bluenode] (a13) at (1,-3)  {13};
  \node[bluenode] (a14) at (2.2,-2.5)  {14};
  
  % Edges
  \draw (a1)--(a2)--(a3)--(a4)--(a11)--(a5)--(a2);
  \draw (a1)--(a6)--(a2);
  \draw (a5)--(a6)--(a10)--(a11)--(a13);
  \draw (a1)--(a7)--(a6);
  \draw (a7)--(a8)--(a9)--(a6);
  \draw (a9)--(a10);
  \draw (a11)--(a12)--(a14)--(a13);
  \draw (a11)--(a14);
  \draw (a6)--(a8);

  \filldraw[OliveGreen,opacity=.1] (a1.center)--(a7.center)--(a6.center)--cycle;
  \filldraw[OliveGreen,opacity=.1] (a3.center)--(a2.center)--(a5.center)--cycle;
  \filldraw[OliveGreen,opacity=.1] (a10.center)--(a9.center)--(a6.center)--cycle;
  \filldraw[OliveGreen,opacity=.1] (a11.center)--(a12.center)--(a14.center)--cycle;

  \filldraw[OliveGreen,opacity=.4] (a1.center)--(a2.center)--(a6.center)--cycle;   % {1,2,6}
  \filldraw[OliveGreen,opacity=.4] (a2.center)--(a5.center)--(a6.center)--cycle;   % {2,5,6}
  \filldraw[OliveGreen,opacity=.4] (a5.center)--(a6.center)--(a10.center)--cycle;  % {5,6,10}
  \filldraw[OliveGreen,opacity=.4] (a5.center)--(a10.center)--(a11.center)--cycle; % {5,10,11}
  \filldraw[OliveGreen,opacity=.4] (a10.center)--(a11.center)--(a13.center)--cycle;% {10,11,13}

  \path (a1) -- (a2) coordinate[midway] (m12);
  \path (a2) -- (a6) coordinate[midway] (m26);
  \path (a5) -- (a6) coordinate[midway] (m56);
  \path (a5) -- (a10) coordinate[midway] (m510);
  \path (a10) -- (a11) coordinate[midway] (m1011);
  \path (a11) -- (a13) coordinate[midway] (m1113);

  \draw[dotted, very thick, red] (m12)--(m26)--(m56)--(m510)--(m1011)--(m1113);

  \foreach \p in {m12,m26,m56,m510,m1011,m1113}{
    \fill[red] (\p) circle (2pt);
  }

\end{tikzpicture}
}
    \caption{Example of a 2-Path in a Complex. Here $P $= $(\{1,2\}$,$\{2,6\}$,$\{5,6\}$,$\{5,10\}$,$\{10,11\}$,$\{11,13\})$ is a 2-path from $\{1,2\}$ to $\{11,13\}$.}
    \label{fig:2-path-in-a-complex}
\end{figure}
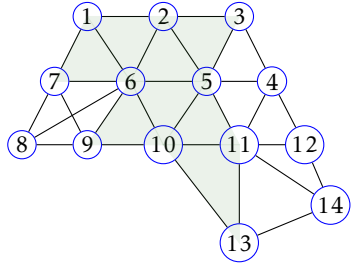

In a weighted $d$-complex \( (X,w) \), the \emph{shortest $d$-path} between two vertices \( \sigma_1 \) and \( \sigma_2 \) is a $d$-path from \( \sigma_1 \) to \( \sigma_2 \) that minimizes the sum of the weights of the $d$-simplices along the $d$-path. The shortest $d$-path may not be unique. The \emph{weighted distance} between two $(d-1)$-simplices \( \sigma_1, \sigma_2 \in X_{(d-1)} \) is the sum of the weights of the $d$-simplices along the shortest $d$-path from \( \sigma_1 \) to \( \sigma_2 \). The weighted distance is denoted as \( d_w(\sigma_1, \sigma_2) \).

If there is no $d$-path between \( \sigma_1 \) and \( \sigma_2 \), the weighted distance is defined to be infinity, i.e., \( d_w(\sigma_1, \sigma_2) = \infty \).

\section{Discovery Algorithms in Simplicial Complexes}
\label{sec:graph-discovery}

Before discussing shortest-path algorithms, we examine two foundational graph-discovery algorithms generalized to simplicial complexes. These procedures, breadth-first search and depth-first search, form the unweighted combinatorial basis of all subsequent weighted algorithms.

\paragraph{Adjacency Graph of $(d-1)$-Simplices}

Let $(X,w)$ be a weighted $d$-complex with a complete $(d-1)$-skeleton.
We define the \emph{adjacency graph} associated with $X$ as
\[
G(X) := (V,E),
\quad
V := X_{d-1}, \qquad
E := \{ (\sigma,\sigma') \in V\times V \mid \sigma \cup \sigma' \in X_d \}.
\]
Each node in $G(X)$ corresponds to a $(d-1)$-simplex, and an edge connects two
nodes if they jointly form a $d$-simplex of $X$.  When $d=1$, $G(X)$ coincides
with the underlying graph itself.

A traversal of $G(X)$ corresponds to the discovery of $(d-1)$-simplices
reachable under adjacency; such traversals preserve the local combinatorial
structure of the complex and establish its connectivity at the $(d-1)$-level.

\subsection{Breadth-First Search (BFS) in a $d$-Complex}

Breadth-first search explores $G(X)$ level by level, discovering all
$(d-1)$-simplices at minimal unweighted $d$-path distance from a given
starting simplex $\sigma_0 \in X_{d-1}$.

\begin{algorithm}[ht]
\caption{Breadth-First Search (BFS) in a $d$-Complex}
\label{alg:bfs-d}
\SetKwInOut{Input}{Input}\SetKwInOut{Output}{Output}
\Input{Connected $d$-complex $X=(X_{d-1},X_d)$, starting simplex $\sigma_0\in X_{d-1}$}
\Output{Traversal order and distance levels of $(d-1)$-simplices from $\sigma_0$}

Mark all $\sigma\in X_{d-1}$ as \emph{unvisited};\\
Initialize an empty queue $Q$;\\
$t(\sigma)\gets +\infty$ for all $\sigma$, \quad $t(\sigma_0)\gets 0$;\\
Enqueue $\sigma_0$ and mark it visited;\\
\While{$Q$ not empty}{
    $\sigma \gets$ Dequeue($Q$);\\
    \ForEach{$\sigma'\in S_\sigma$ with $\sigma\cup\sigma'\in X_d$}{
        \If{$\sigma'$ unvisited}{
            Mark $\sigma'$ as visited\;
            $t(\sigma')\gets t(\sigma)+1$\;
            parent$(\sigma')\gets\sigma$\;
            Enqueue $\sigma'$;
        }
    }
}
\end{algorithm}

\begin{lemma}[Correctness of BFS in $d$-Complexes]
\label{lem:bfs-correctness}
Let $X$ be a connected $d$-complex and $\sigma_0\in X_{d-1}$.  After
Algorithm~\ref{alg:bfs-d} terminates, the value $t(\sigma)$ equals the length
of the shortest unweighted $d$-path from $\sigma_0$ to $\sigma$, for all
$\sigma\in X_{d-1}$.  Moreover, the set of parent relations forms a shortest
$d$-path tree rooted at $\sigma_0$.
\end{lemma}

\begin{proof}
We proceed by induction on the number of layers discovered.
Initially, $t(\sigma_0)=0$.  Suppose the claim holds for all simplices whose
distance from $\sigma_0$ is at most $k$.  When processing such a simplex
$\sigma$, each unvisited neighbor $\sigma'$ with $\sigma\cup\sigma'\in X_d$
is assigned distance $t(\sigma')=t(\sigma)+1=k+1$.  Since every $d$-path
to $\sigma'$ must pass through some $\sigma$ at distance $k$, this value is
minimal.  The FIFO queue ensures that simplices are processed in nondecreasing
order of $t(\cdot)$, hence the invariant is maintained.  Upon termination,
$t(\sigma)$ equals the minimal number of $d$-simplices connecting
$\sigma_0$ and $\sigma$, establishing the result.
\end{proof}

\begin{lemma}[Complexity of BFS]
\label{lem:bfs-complexity}
Let $N:=|X_{d-1}|$ and $M:=|X_d|$.  Algorithm~\ref{alg:bfs-d} runs in time
$\mathcal{O}(N+M)$ and requires space $\mathcal{O}(N+M)$.
\end{lemma}

\begin{proof}
Each $(d-1)$-simplex is enqueued and dequeued at most once.  Each $d$-simplex
is inspected only through its incident $(d-1)$-simplices, implying that
the total number of adjacency checks is linear in $N+M$.
\end{proof}

\subsection{Depth-First Search (DFS) in a $d$-Complex}

Depth-first search performs a recursive traversal of $G(X)$, producing a
spanning forest that captures the hierarchical structure of adjacency among
$(d-1)$-simplices.  This recursive process also serves as the basis for cycle
detection and enumeration.

\begin{algorithm}[ht]
\caption{Depth-First Search (DFS) in a $d$-Complex}
\label{alg:dfs-d}
\SetKwInOut{Input}{Input}\SetKwInOut{Output}{Output}
\Input{Connected $d$-complex $X=(X_{d-1},X_d)$}
\Output{DFS forest and parent relation on $X_{d-1}$}

Mark all $(d-1)$-simplices as unvisited;\\
\ForEach{unvisited $\sigma\in X_{d-1}$}{
    DFS($\sigma$);
}

\BlankLine
\textbf{Procedure} DFS($\sigma$):\\
Mark $\sigma$ as visited;\\
\ForEach{$\sigma'\in S_\sigma$ with $\sigma\cup\sigma'\in X_d$}{
    \If{$\sigma'$ unvisited}{
        parent$(\sigma')\gets\sigma$;
        DFS($\sigma'$);
    }
}
\end{algorithm}

\begin{lemma}[DFS Visits and Forest Structure]
\label{lem:dfs-correctness}
Let $X$ be a connected $d$-complex.
Algorithm~\ref{alg:dfs-d} visits each $(d-1)$-simplex exactly once and
produces a spanning tree of the adjacency graph $G(X)$.
\end{lemma}

\begin{proof}
Marking ensures that each simplex $\sigma$ is visited exactly once, since
recursive calls are made only on unvisited simplices.  Parent pointers define
a tree structure with unique predecessors except the root.  If $X$ has multiple
connected components, the outer loop constructs a spanning forest.
\end{proof}

\begin{lemma}[Complexity of DFS]
\label{lem:dfs-complexity}
Let $N=|X_{d-1}|$ and $M=|X_d|$.  Algorithm~\ref{alg:dfs-d} runs in time
$\mathcal{O}(N+M)$ and space $\mathcal{O}(N+M)$.
\end{lemma}

\begin{proof}
Each adjacency $(\sigma,\sigma')$ is explored at most once.
Recursive depth never exceeds $N$, and the storage for visited markers and
parent pointers is linear in $N+M$.
\end{proof}

\subsection{Essence of Discovery Algorithms}

Both BFS and DFS operate solely on the adjacency relation induced by $X_d$. They generalize classical graph-traversal schemes to the simplicial setting, discovering the structure of $(d-1)$-simplices and their interconnections via $d$-simplices. BFS produces minimal unweighted distances (Lemma~\ref{lem:bfs-correctness}) and provides a natural initialization for weighted algorithms such as Dijkstra and Bellman-Ford, which extend the scalar distance updates to real weights $w(\sigma\cup\sigma')$. DFS, besides its use in connectivity and topological sorting, underlies the cycle-detection and enumeration techniques. Both algorithms have linear complexity in the size of the complex, matching the optimal asymptotic bounds known for graph traversals.

The BFS and DFS algorithms presented above provide the essential discrete
infrastructure for subsequent metric computations on simplicial complexes.
They yield both structural information (connectivity, spanning trees, and
cycles) and the traversal order required by shortest-path procedures,
which are examined in the next section.

% \section{Dijkstra's Algorithm for Graphs}

% In this section, we briefly recall Dijkstra's algorithm for graphs. Dijkstra's algorithm [\cite{west2001introduction}, Algorithm 2.3.5] finds the shortest path from a source $(d-1)$-simplex \( \sigma_0 \)  (vertex) to all other vertices in a weighted graph \( G = (V, E, w) \). The algorithm iteratively selects the $(d-1)$-simplex with the smallest known distance, relaxes its neighbours, and updates the shortest known distances. On a high level, the algorithm follows the following steps (1-3).

% \begin{enumerate}
%     \item Initialize the distance to the source \( d(s) = 0 \) and all other vertices \( d(v) = \infty \).
%     \item Select the unvisited $(d-1)$-simplex \( u \) with the smallest \( d(u) \), and update the distances to its neighbors:
%     \[
%     d(v) = \min(d(v), d(u) + w(u, v))
%     \]
%     \item Repeat until all vertices have been visited or the smallest distance is infinity.
% \end{enumerate}

\section{Shortest Paths in Absence of Negative Weights}\label{sec:algo}

In this section, we describe and analyze our first algorithm for finding shortest path. We also provide example of the workings of our algorithm. 

Our algorithm is a greedy algorithm to find the shortest $d$-paths from a source $(d-1)$-simplex \( \sigma_0 \) to all other $(d-1)$-simplexes in a weighted complex \( (X, w) \), where \( w: X_d \to \mathbb{R} \) represents the weight function. Note that the weights have to be non-negative for this algorithm to work. The algorithm works by iteratively selecting the $(d-1)$-simplexes with the smallest known distance and updating the distances to their neighbouring $(d-1)$-simplexes. The pseudo-code of our algorithm is given in \cref{alg:dijk-simplicial-complex}. But for ease of understanding, the high-level idea of our algorithm is described in the following steps. 

\begin{flushleft}
   \textbf{Steps of Our Algorithm:} 
\end{flushleft}

\begin{enumerate}
    \item \textbf{Input:}
    A weighted $d$-complex with non-negative weight function $w$ and a starting $(d-1)$-simplex $\sigma_0$. For any $x\in \mathcal{X}_0$, the weight of the $d$-simplex $\sigma_0 \cup \{x\}$ is $w\left(\sigma_0 \cup \{x\}\right)$; Let $w\left(\sigma_0 \cup \{x\}\right) =\infty$ if $\sigma_0 \cup \{x\} \notin X^d$.
    \item \textbf{Initialization:}
    \begin{enumerate}
        \item Set the distance to the source $(d-1)$-simplex \( \sigma_0 \) as \( t(\sigma_0) = 0 \).
        \item Set the distance to all other $(d-1)$-simplexes \( \sigma \in X_{(d-1)} \setminus \{\sigma_0\} \) as \( t(\sigma) = \infty \).
        \item Mark all vertices as unvisited.
    \end{enumerate}
    
    \item \textbf{Iteration:}
    \begin{enumerate}
        \item Select the unvisited $(d-1)$-simplex \( \sigma_1 \) with the smallest known distance \( t(\sigma_1) \). The choice of $\sigma_1$ may be more than one.
        \item For each neighbor \( \sigma_2 \) of \( \sigma_1 \), if the $d$-path from \( \sigma_0 \) to \( \sigma_2 \) through \( \sigma_1 \) offers a shorter $d$-path than the current known distance, update the distance:
        \[
        t(\sigma_2) = \min(t(\sigma_2), t(\sigma_1) + w(\sigma_1, \sigma_2))
        \]
    \end{enumerate}
    
    \item \textbf{Termination:}
    \begin{enumerate}
        \item Mark \( \sigma_2 \) as visited. If all $(d-1)$-simplexes have been visited or the smallest known distance among the unvisited $(d-1)$-simplexes is infinity, the algorithm terminates.
        \item The shortest $d$-path from \( \sigma_0 \) to each $(d-1)$-simplex is now known.
    \end{enumerate}
\end{enumerate}

\begin{algorithm}[!ht] 
	\DontPrintSemicolon
	\KwIn{A weighted $d$-complex with non-negative weight function $w$ and a starting $(d-1)$-simplex $\sigma_{0}$}	
	\KwOut{The shortest $d$-path from $\sigma_{0}$ to each $(d-1)$-simplex}
	\SetKwRepeat{REPEAT}{repeat}{until}
    Set the distance to the source $(d-1)$-simplex \( \sigma_0 \) as \( t(\sigma_0) = 0 \)\\
    Set the distance to all other $(d-1)$-simplexes \( \sigma \in X_{(d-1)} \setminus \{\sigma_0\} \) as \( t(\sigma) = \infty \)\\
    Mark all vertices as unvisited\\

    \While{there exists an unvisited $(d-1)$-simplex $\sigma_{1}$ with $t(\sigma_{1}) < \infty$}{
        Select an unvisited $\sigma_{1}$ minimizing $t(\sigma_{1})$\;
        \ForEach{unvisited neighbor $\sigma_{2}$ of $\sigma_{1}$ (i.e.\ $\sigma_{1}\cup\sigma_{2}\in X^{d}$)}{
            $t(\sigma_{2}) \gets \min\bigl(t(\sigma_{2}),\;t(\sigma_{1}) + w(\sigma_{1}\cup\sigma_{2})\bigr)$\;
        }
        Mark $\sigma_{1}$ as visited\;
    }
    \Return{Distance function $t\colon X_{(d-1)}\to [0,\infty]\,$, where $t(\sigma)$ is the minimum total weight of a chain of $d$-simplices from $\sigma_{0}$ to $\sigma$.}\;

	\caption{\textsc{Shortest $d$-paths in Simplicial Complexes}}
	\label{alg:dijk-simplicial-complex}
\end{algorithm}

For an example of the application of this algorithm, see \cref{fig:algorithm-example}.

\begin{figure}[ht!]
    \centering
    \scalebox{.7}{\begin{tikzpicture}[scale=1.2]

\tikzset{
    bluenode/.style={
        draw=blue,circle, inner sep =1pt
        },
    rednode/.style={
        },
}
  \node[bluenode] (a1) at (-2,0) {1};  
  \node[bluenode] (a2) at (-1,-.6)  {2}; 
  \node[bluenode] (a3) at (-1,.6)  {3};  
  \node[bluenode] (a4) at (0,1.2) {4};  
  \node[bluenode] (a5) at (0,0)  {5};  
  \node[bluenode] (a6) at (0,-1.2)  {6};  
  \node[bluenode] (a7) at (1,.6)  {7};  
  
  \draw (a1) -- (a2); % these are the straight lines from one vertex to another  
  \draw (a1) -- (a3);  
  \draw (a2) -- (a3);  
  \draw (a3) -- (a4);  
  \draw (a3) -- (a5);  
  \draw (a2) -- (a6);
  \draw (a2) -- (a5);
  \draw (a5) -- (a6);  
  \draw (a5) -- (a4);
  \draw (a5) -- (a7);
  \draw (a4) -- (a7);

  %%%%%%%%%%%%%%%%%%%%%%%%%%%%%%%%%

  \node[bluenode] (a11) at (-2,4) {1};  
  \node[bluenode] (a21) at (-1,3.4)  {2}; 
  \node[bluenode] (a31) at (-1,4.6)  {3};  
  \node[bluenode] (a41) at (0,5.2) {4};  
  \node[bluenode] (a51) at (0,4)  {5};  
  \node[bluenode] (a61) at (0,2.8)  {6};  
  \node[bluenode] (a71) at (1,4.6)  {7};  
  
  \draw (a11) -- (a21); % these are the straight lines from one vertex to another  
  \draw (a11) -- (a31);  
  \draw (a21) -- (a31);  
  \draw (a31) -- (a41);  
  \draw (a31) -- (a51);  
  \draw (a21) -- (a61);
  \draw (a21) -- (a51);
  \draw (a51) -- (a61);  
  \draw (a51) -- (a41);
  \draw (a51) -- (a71);
  \draw (a41) -- (a71);

  %%%%%%%%%%%%%%%%%%%%%%%%%%%%%%%%%

  \node[bluenode] (a12) at (-2,8) {1};  
  \node[bluenode] (a22) at (-1,7.4)  {2}; 
  \node[bluenode] (a32) at (-1,8.6)  {3};  
  \node[bluenode] (a42) at (0,9.2) {4};  
  \node[bluenode] (a52) at (0,8)  {5};  
  \node[bluenode] (a62) at (0,6.8)  {6};  
  \node[bluenode] (a72) at (1,8.6)  {7};  
  
  \draw (a12) -- (a22); % these are the straight lines from one vertex to another  
  \draw (a12) -- (a32);  
  \draw (a22) -- (a32);  
  \draw (a32) -- (a42);  
  \draw (a32) -- (a52);  
  \draw (a22) -- (a62);
  \draw (a22) -- (a52);
  \draw (a52) -- (a62);  
  \draw (a52) -- (a42);
  \draw (a52) -- (a72);
  \draw (a42) -- (a72);

  %%%%%%%%%%%%%%%%%%%%%%%%%%%%%%%%%

  \node[bluenode] (a13) at (-2,12) {1};  
  \node[bluenode] (a23) at (-1,11.4)  {2}; 
  \node[bluenode] (a33) at (-1,12.6)  {3};  
  \node[bluenode] (a43) at (0,13.2) {4};  
  \node[bluenode] (a53) at (0,12)  {5};  
  \node[bluenode] (a63) at (0,10.8)  {6};  
  \node[bluenode] (a73) at (1,12.6)  {7};  
  
  \draw (a13) -- (a23); % these are the straight lines from one vertex to another  
  \draw (a13) -- (a33);  
  \draw (a23) -- (a33);  
  \draw (a33) -- (a43);  
  \draw (a33) -- (a53);  
  \draw (a23) -- (a63);
  \draw (a23) -- (a53);
  \draw (a53) -- (a63);  
  \draw (a53) -- (a43);
  \draw (a53) -- (a73);
  \draw (a43) -- (a73);

  %%%%%%%%%%%%%%%%%%%%%%%%%%%%%%%%%

  \node[bluenode] (a14) at (-2,16) {1};  
  \node[bluenode] (a24) at (-1,15.4)  {2}; 
  \node[bluenode] (a34) at (-1,16.6)  {3};  
  \node[bluenode] (a44) at (0,17.2) {4};  
  \node[bluenode] (a54) at (0,16)  {5};  
  \node[bluenode] (a64) at (0,14.8)  {6};  
  \node[bluenode] (a74) at (1,16.6)  {7};  
  
  \draw (a14) -- (a24); % these are the straight lines from one vertex to another  
  \draw (a14) -- (a34);  
  \draw (a24) -- (a34);  
  \draw (a34) -- (a44);  
  \draw (a34) -- (a54);  
  \draw (a24) -- (a64);
  \draw (a24) -- (a54);
  \draw (a54) -- (a64);  
  \draw (a54) -- (a44);
  \draw (a54) -- (a74);
  \draw (a44) -- (a74);

  %%%%%%%%%%%%%%%%%%%%%%%%%%%%%%%%

\node[rednode] at  (-1.4,0) {$2$};
\node[rednode] at  (-.6,0) {$3$};
\node[rednode] at  (.4,.6) {$4$};
\node[rednode] at  (-.4,.6) {$5$};
\node[rednode] at  (-.4,-.6) {$7$};

\node[rednode] at  (-1.4,4) {$2$};
\node[rednode] at  (-.6,4) {$3$};
\node[rednode] at  (.4,4.6) {$\infty$};
\node[rednode] at  (-.4,4.6) {$5$};
\node[rednode] at  (-.4,3.4) {$7$};

\node[rednode] at  (-1.4,8) {$2$};
\node[rednode] at  (-.6,8) {$3$};
\node[rednode] at  (.4,8.6) {$\infty$};
\node[rednode] at  (-.4,8.6) {$\infty$};
\node[rednode] at  (-.4,7.4) {$\infty$};

\node[rednode] at  (-1.4,12) {$2$};
\node[rednode] at  (-.6,12) {$\infty$};
\node[rednode] at  (.4,12.6) {$\infty$};
\node[rednode] at  (-.4,12.6) {$\infty$};
\node[rednode] at  (-.4,11.4) {$\infty$};

\node[rednode] at  (-1.4,16) {$\infty$};
\node[rednode] at  (-.6,16) {$\infty$};
\node[rednode] at  (.4,16.6) {$\infty$};
\node[rednode] at  (-.4,16.6) {$\infty$};
\node[rednode] at  (-.4,15.4) {$\infty$};

\node[rednode] at  (2.4,17) {$w(\{1,2,3\}) = 2,$};
\node[rednode] at  (4.6,17) {$w(\{3,4,5\}) = 5,$};
\node[rednode] at  (6.8,17) {$w(\{2,5,6\}) = 7,$};
\node[rednode] at  (2.4,16.5) {$w(\{2,3,5\}) = 3,$};
\node[rednode] at  (4.6,16.5) {$w(\{4,5,7\}) = 5,$};

\node[rednode] at  (2.8,15.6) {$\sigma_{0} = \{1,3\},\sigma_{0} = \{4,7\}$};
\node[rednode] at  (5.3,15.1) {$S \leftarrow \{\{1,3\}\}$ i.e. except $\{1,3\}$ every 1-simplex is unvisited};

\node[rednode] at  (2.8,12.6) {$S \leftarrow S \cup \{\{1,2\},\{2,3\}\}$};

\node[rednode] at  (3,12.1) {$t(\{1,2\}) = 2$};
\node[rednode] at  (3,11.6) {$t(\{2,3\}) = 2$};

\node[rednode] at  (2.8,8.6) {$S \leftarrow S \cup \{\{2,5\},\{3,5\}\}$};
\node[rednode] at  (3,8.1) {$t(\{2,5\}) = 2+3 = 5$};
\node[rednode] at  (3,7.6) {$t(\{3,5\}) = 2+3 = 5$};

\node[rednode] at  (3.6,4.6) {$S \leftarrow S \cup \{\{3,4\},\{4,5\},\{2,6\},\{5,6\}\}$};
\node[rednode] at  (3.6,4.1) {$t(\{3,4\}) = 2+3+5 = 10$};
\node[rednode] at  (3.6,3.6) {$t(\{4,5\}) = 2+3+5 = 10$};
\node[rednode] at  (3.6,3.1) {$t(\{2,6\}) = 2+3+7 = 12$};
\node[rednode] at  (3.6,2.6) {$t(\{5,6\}) = 2+3+7 = 12$};

\node[rednode] at  (2.8,.6) {$S \leftarrow S \cup \{\{4,7\},\{5,7\}\}$};
\node[rednode] at  (3.6,.1) {$t(\{4,7\}) = 2+3+5+4 = 14$};
\node[rednode] at  (3.6,-.6) {$t(\{5,7\}) = 2+3+5+4 = 14$};

\node[rednode] at  (5.6,-1.6) {$\boxed{d_{w}(\{1,3\},\{4,7\}) = t(\{4,7\}) = 14}$};

\end{tikzpicture}  }
    \caption{Diagram explaining the execution of the algorithm using an example of finding $d_{w}(\{1,3\},\{4,7\})$ in the complex}
    \label{fig:algorithm-example}
\end{figure}

\subsection{Analysis of \cref{alg:dijk-simplicial-complex}}
\paragraph{Correctness} We begin by proving the correctness of our algorithm. For this, we state and prove the following theorem.

\begin{theorem}\label{thm:correct}
Given a weighted $d$-complex with non-negative weight function $w$ and a starting $(d-1)$-simplex $\sigma_0$, \cref{alg:dijk-simplicial-complex} computes \( d_w(\sigma_0, \sigma_1) \) for every \( \sigma_1 \in X_{(d-1)}. \)
\end{theorem}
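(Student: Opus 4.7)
The plan is to adapt the classical correctness proof of Dijkstra's algorithm to this higher-dimensional setting by induction on iterations of the main \textbf{while} loop. I would maintain the loop invariant that, at the start of each iteration: (i) for every visited $(d-1)$-simplex $\sigma$, the label $t(\sigma)$ equals $d_w(\sigma_0,\sigma)$; and (ii) for every unvisited $\sigma$, $t(\sigma)$ equals the minimum weight over all $d$-paths $(\sigma_0 = \tau_0, \tau_1, \dots, \tau_k = \sigma)$ whose prefix simplices $\tau_0, \dots, \tau_{k-1}$ are all visited, with $t(\sigma) = \infty$ when no such $d$-path exists. The base case is immediate from the initialization: only $\sigma_0$ has finite label, and $d_w(\sigma_0,\sigma_0) = 0$ via the trivial path of length zero.

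For the inductive step, let $\sigma_1$ be the simplex chosen in the current iteration, and assume for contradiction that there is a $d$-path $P = (\sigma_0 = \tau_0, \ldots, \tau_k = \sigma_1)$ of total weight strictly less than $t(\sigma_1)$. Let $\tau_j$ be the first simplex along $P$ that is still unvisited; such a $j$ exists because $\sigma_1$ itself is unvisited. The prefix $(\tau_0, \ldots, \tau_j)$ is again a $d$-path (distinctness of both its $(d-1)$-simplices and of its accompanying $d$-simplices being inherited from $P$), and all of $\tau_0, \ldots, \tau_{j-1}$ are visited. By invariant (ii), $t(\tau_j)$ is at most the weight of this prefix, and by nonnegativity of $w$ the prefix weight is at most the total weight of $P$, hence strictly less than $t(\sigma_1)$. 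This contradicts the selection of $\sigma_1$ as a minimizer of $t$ among unvisited labels, proving $t(\sigma_1) = d_w(\sigma_0,\sigma_1)$. Marking $\sigma_1$ visited and relaxing its unvisited neighbors $\sigma_2$ by $t(\sigma_2) \gets \min(t(\sigma_2), t(\sigma_1) + w(\sigma_1 \cup \sigma_2))$ then restores invariant (ii) for the enlarged visited set: any prefix-visited $d$-path to $\sigma_2$ either has penultimate simplex distinct from $\sigma_1$ (captured by the old $t(\sigma_2)$) or equal to $\sigma_1$ (captured by the new term). Termination handles unreachable simplices: once every unvisited label is $\infty$, invariant (ii) forces $d_w(\sigma_0,\sigma) = \infty$ for every remaining $\sigma$.

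The main obstacle I anticipate is purely bookkeeping, namely verifying that prefixes of $d$-paths remain $d$-paths in the sense of \cref{d-path} and that the algorithm never needs to enforce the distinctness clauses explicitly: any repetition of a $(d-1)$-simplex or of a $d$-simplex along a candidate shortest path would yield a strictly shorter alternative by nonnegativity of $w$, so the minimizers implicitly computed by the algorithm automatically satisfy the $d$-path axioms. Beyond this, the argument is a clean port of the graph case, the only structural difference being that consecutive simplices in a $d$-path share a $(d-1)$-face rather than a single vertex, which plays no role in the metric estimates driving the proof.
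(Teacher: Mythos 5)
Your proposal is correct and follows essentially the same route as the paper: the classical Dijkstra induction on the set of visited $(d-1)$-simplices, with the cut/exchange argument showing that the label of the chosen minimizer is already the true distance. If anything, your version is more careful than the paper's, since you state the loop invariant explicitly and address the point (which the paper passes over) that the distinctness/disjointness clauses in the definition of a $d$-path are harmless because repetitions along a walk can be shortcut without increasing the total weight.
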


\begin{proof}
We prove that \cref{alg:dijk-simplicial-complex} correctly computes the shortest $d$-path from the source $(d-1)$-simplex \( \sigma_0 \) to every other $(d-1)$-simplex in a weighted complex \( (X,w) \), where \( w \) is the weight function on the edges. We use Mathematical Induction to show that at each step of Dijkstra's algorithm, the distance \( t(\sigma_1) \) is always the correct shortest $d$-path from the source \( \sigma_0 \) to $(d-1)$-simplex \( \sigma_1 \) for all vertices \( \sigma_1 \in X_{(d-1)} \).

At any step of the algorithm, let $\mathcal{S}$ be the set of $(d-1)$-simplexes whose shortest distance from the source $\sigma_0$ has been finalised. Initially, the set \( \mathcal{S} =\{\sigma_0\} \), and the algorithm sets \( t(\sigma_0) = 0 \) and \( t(\sigma_2) = \infty \) for all \( \sigma_2 \neq \sigma_0 \). Clearly, the distance from \( \sigma_0 \) to \( \sigma_0 \) is correct since \( t(\sigma_0) = 0 \). So, the algorithm correctly initialises the distances.

Assume that the algorithm has correctly computed the shortest $d$-path distances for all vertices in \( \mathcal{S}\), where $|\mathcal{S}| = m$ for some $m\in \mathbb{N}$. Let \( \sigma_1 \) be the next $(d-1)$-simplex added to \( \mathcal{S}\). By the algorithm’s rule, \( \sigma_1 \) is the unvisited $(d-1)$-simplex with the smallest tentative distance \( t(\sigma_1) \). We need to show that this choice of \( \sigma_1 \) guarantees that \( t(\sigma_1) \) is the correct shortest $d$-path from \( \sigma_0 \) to \( \sigma_1 \).

Any $d$-path \( P \) from $\sigma_0$ to $\sigma_1$ must exit \(\mathcal{S}\) before reaching \( \sigma_1 \). The induction hypothesis states that the length of the shortest $d$-path from $S$ to $\sigma_1$ is $t(\sigma_1)$. The induction hypothesis and the choice of $\sigma_1$ also guarantee that a $d$-path visiting any $(d-1)$-simplex outside $\mathcal{S}$ and later reaching $\sigma_1$ has length at least $t(\sigma_1)$. Hence $d_w(\sigma_0,\sigma_1) = t(\sigma_1)$. Therefore, the algorithm correctly computes the shortest $d$-path to \( \sigma_1 \) and adds \( \sigma_1 \) to \( \mathcal{S} \). The algorithm then proceeds to the next $(d-1)$-simplex in the same way, ensuring that at every step, the shortest $d$-paths from \( \sigma_0 \) to all vertices in \( \mathcal{S} \) are correctly computed.

The algorithm terminates when \( \mathcal{S} = X_{d-1}\), at which point \( t(\sigma_1) \) for each $(d-1)$-simplex \( \sigma_1 \) in \( X_{d-1} \) represents the shortest $d$-path from \( \sigma_0 \) to \( \sigma_1 \).

Since the algorithm always selects the $(d-1)$-simplex with the smallest tentative distance at each step and updates distances optimally, the final distances \( t(\sigma_1) \) are indeed the correct shortest $d$-path distances from \( \sigma_0 \) to every $(d-1)$-simplex in \( X_{d-1} \).
\end{proof}

Hence, the correctness of the algorithm \cref{alg:dijk-simplicial-complex} follows from \cref{thm:correct}.

\paragraph{Complexity} Now we deduce the time and memory complexity of \cref{alg:dijk-simplicial-complex}. For clarity we denote the number of $d-1$-simplices by $n = |X_{d-1}|$ and the number of $d$-simplices by $m = |X_{d}|$. We will analyse the algorithm assuming that we use the binary heap data structure to store the distances. Now, if we use a binary heap structure for storing the current distances, then scanning the binary heap to find the smallest distance (\emph{extract-min}) would take $\mathcal{O}(\log n)$ operations per extraction. Next, updating the entry (\emph{decrease-key}) would take $\mathcal{O}(\log n)$ operations per iteration. Hence, the time-complexity of the algorithm is a sum of $n$ extractions and up to $m$ decrease-key operations. Therefore, the overall runtime of the algorithm is given by $\mathcal{O}((n+m)\log n)$.

For each $d-1$-simplex, we store the tentative distance from the starting point, which requires $\mathcal{O}(n)$ memory. Next, for each $d-1$-simplex, we need to store a boolean value that represents whether that simplex has been visited or not. This again requires $\mathcal{O}(n)$ memory. The algorithm also uses a priority queue to quickly extract an unvisited $d-1$-simplex with the smallest distance. The queue stores up to $n$ entries at once, each corresponding to a $d-1$-simplex and its associated distance. Therefore, the queue requires $\mathcal{O}(n)$ memory. Now, to find the neighbours of a given $d-1$-simplex, we must know all the $d$-simplices that contain the given $d-1$-simplex. It requires $\mathcal{O}(m)$ memory to store this adjacency information. This makes the total memory requirement of the algorithm to be $\mathcal{O}(m+n)$.

\section{Generalization of the Bellman-Ford Algorithm}\label{sec:bellman}

We now present a relaxation-based algorithm that generalizes the classical
Bellman-Ford procedure to the higher-dimensional setup introduced in this
paper. In contrast to Algorithm~\ref{alg:dijk-simplicial-complex}, which assumes
non-negative $d$-simplex weights, the following method correctly computes
shortest weighted $(d-1)$-paths even in the presence of negative weights and
detects the existence of negative $d$-cycles.

\medskip
Let $(X,w)$ be a weighted $d$-complex with the vertex set
$V := X_{d-1}$ and the adjacency relation defined as before:
two $(d-1)$-simplices $\sigma,\sigma'\in V$ are \emph{neighbors} if and only
if $\sigma\cup\sigma'\in X_d$. The cost associated with the pair
$(\sigma,\sigma')$ is the weight $w(\sigma\cup\sigma')$ of that $d$-simplex.
Define the edge set
\[
E := \{ (\sigma,\sigma') \in V\times V \mid \sigma\cup\sigma' \in X_d \},
\qquad c(\sigma,\sigma') := w(\sigma\cup\sigma').
\]
A walk of $(d-1)$-simplices
$\Pi=(\sigma_0,\sigma_1,\dots,\sigma_\ell)$ with
$(\sigma_{i-1},\sigma_i)\in E$ for all $i$ is called a \emph{$d$-path},
and its total weight is
\(
w(\Pi) := \sum_{i=1}^{\ell} c(\sigma_{i-1},\sigma_i).
\)
A \emph{negative $d$-cycle} is a closed $d$-path $\Pi$ with $w(\Pi)<0$.

\begin{algorithm}[ht]
\caption{Bellman-Ford on Weighted $d$-Complexes (BF-$d$)}
\label{alg:bf-d}
\SetKwInOut{Input}{Input}\SetKwInOut{Output}{Output}
\Input{Weighted $d$-complex $(X,w)$, source $\sigma_0 \in X_{d-1}$}
\Output{Shortest distances $t(\sigma)$ from $\sigma_0$; or report a negative $d$-cycle}

$V \gets X_{d-1}$; \quad
$E \gets \{(\sigma,\sigma') : \sigma\cup\sigma' \in X_d\}$\;

\ForEach{$\sigma\in V$}{ $t(\sigma)\gets +\infty$ }
$t(\sigma_0)\gets 0$\;

\BlankLine
\For{$i\gets 1$ \KwTo $|V|-1$}{
    \ForEach{$(\sigma,\sigma')\in E$}{
        \If{$t(\sigma)\neq +\infty$ \textbf{and} $t(\sigma') > t(\sigma)+c(\sigma,\sigma')$}{
            $t(\sigma') \gets t(\sigma)+c(\sigma,\sigma')$\;
        }
    }
}

\BlankLine
$U \gets \varnothing$\;
\ForEach{$(\sigma,\sigma')\in E$}{
    \If{$t(\sigma)\neq +\infty$ \textbf{and} $t(\sigma') > t(\sigma)+c(\sigma,\sigma')$}{
        $U \gets U \cup \{\sigma'\}$\;
    }
}
\BlankLine
\If{$U\neq\varnothing$}{
    \textbf{Report:} ``Negative $d$-cycle detected'' and return the set of all
    $(d-1)$-simplices reachable from $U$ under $E$\;
}
\Else{
    \Return{$t(\cdot)$}\;
}
\end{algorithm}

\paragraph{Discussion}
Algorithm~\ref{alg:bf-d} preserves the same combinatorial interpretation as
Algorithm~\ref{alg:dijk-simplicial-complex}: relaxation is performed over the adjacency
induced by $X_d$, but instead of greedily fixing vertices, all edges are
relaxed iteratively. The procedure tolerates negative simplex weights and
identifies negative $d$-cycles via the standard one-pass verification step.

\begin{example}[Negative $d$-Cycle Detection]
Consider a $2$-complex $X$ whose $1$-simplices (edges)
$\{a,b,c\}$ bound two $2$-simplices $\tau_1,\tau_2$
sharing a common edge $b$. Assign weights
$w(\tau_1)=-1$, $w(\tau_2)=-2$.
The adjacency relation on $V=X_1=\{a,b,c\}$ contains the edges
$(a,b)$, $(b,c)$, $(c,a)$ with total cycle weight
$w(\tau_1)+w(\tau_2)=-3<0$.
Executing Algorithm~\ref{alg:bf-d} with source $a$ yields
decreasing distances after $|V|-1$ relaxations, and
the final verification pass identifies $a,b,c\in U$,
signalling a negative $2$-cycle corresponding to the two faces $\tau_1,\tau_2$.
\end{example}

\begin{lemma}[Correctness of BF-$d$]
\label{lem:bf-correctness}
Let $(X,w)$ be a finite weighted $d$-complex and let
$\sigma_0\in X_{d-1}$. If $(X,w)$ contains no negative $d$-cycle,
then upon termination Algorithm~\ref{alg:bf-d} satisfies
\[
t(\sigma)
= \min_{\Pi:\,\sigma_0\leadsto\sigma} w(\Pi)
\quad\text{for all } \sigma\in X_{d-1},
\]
where the minimum is taken over all $d$-paths from $\sigma_0$ to $\sigma$.
If $(X,w)$ contains a negative $d$-cycle,
then $U\neq\varnothing$ after the final iteration, and every element of $U$
lies on or is reachable from some negative $d$-cycle.
\end{lemma}

\begin{proof}
The argument is identical to the classical Bellman-Ford invariant proof,
with the graph of $(d-1)$-simplices replacing the vertex set.
Let $\Pi_k(\sigma)$ denote the family of $d$-paths from $\sigma_0$ to $\sigma$
containing at most $k$ edges.  We prove by induction on $k$
that after the $k$-th relaxation round,
$t(\sigma)\le \min_{\Pi\in\Pi_k(\sigma)}w(\Pi)$ for all $\sigma$.
For $k=0$ this is immediate. For the inductive step, relaxing each
$(\sigma',\sigma)\in E$ allows one additional edge,
so $t(\sigma)\le t(\sigma')+c(\sigma',\sigma)
\le \min_{\Pi\in\Pi_{k-1}(\sigma')}w(\Pi)+c(\sigma',\sigma)
=\min_{\Pi\in\Pi_k(\sigma)}w(\Pi)$. After $|V|-1$ rounds,
every simple path (of length at most $|V|-1$) has been considered.
If no negative cycle exists, a shortest simple path attains the optimum.
If a negative cycle exists, an additional relaxation decreases
$t(\sigma)$ for some $(\sigma,\sigma')$, placing $\sigma'\in U$.
Thus the lemma follows.
\end{proof}

\paragraph{Complexity}
Let $n:=|X_{d-1}|$ and $m_d:=|X_d|$.
Each $d$-simplex induces at most $(d+1)d$ neighbor pairs.
Denoting $m_e:=|E|=\Theta(d^2 m_d)$,
Algorithm~\ref{alg:bf-d} performs $\mathcal{O}(n)$ passes over $m_e$ edges,
giving a total time complexity $\mathcal{O}(n\cdot m_e)=\mathcal{O}(n\,d^2 m_d)$ and
space complexity $\mathcal{O}(n+m_d)$.

\section{Real-life Example of Our Setup}\label{sec:realexample}

In this section we provide a real-life example of our setup that shows how our setup can better model some of the real-life problems than graphs.

In modern communication systems, higher-order interactions often arise naturally, particularly in multi-agent or wireless sensor networks, where groups of devices collaborate to sense, process, or transmit data. These scenarios go beyond the pairwise connectivity that ordinary graphs can express, and are therefore more accurately represented by simplicial complexes.

Consider a distributed network of sensor nodes or autonomous drones deployed over a geographical region. Each sensor node represents a $0$-simplex. Two nodes that can communicate directly via a wireless link form a $1$-simplex. Moreover, subsets of three or more nodes that are capable of joint sensing or cooperative transmission form higher-dimensional simplices. Thus, a 2-simplex $\{i,j,k\}$ indicates a cooperative triplet of sensors that can jointly process or relay data, and more generally, a $d$-simplex represents a $d$-agent coalition capable of group-level communication or computation.

In this setting, a weighted $d$-complex $(X,w)$ arises naturally, where each $d$-simplex $\tau \in X_d$ carries a weight $w(\tau)$ representing the total energy cost, delay, or bandwidth consumption associated with that cooperative operation. Two $(d-1)$-simplices $\sigma,\sigma'$ are said to be adjacent if $\sigma \cup \sigma' \in X_d$, i.e., if they share participation in some $d$-agent transmission. The communication process from one coalition of agents to another can therefore be viewed as a $d$-path in the complex.

The algorithms introduced in this paper can then be interpreted in this framework. The generalized BFS and DFS algorithms on $d$-complexes discover all clusters of agents reachable under cooperative adjacency, thereby serving as topological discovery tools for multi-agent connectivity. The simplicial Dijkstra algorithm (see \cref{alg:dijk-simplicial-complex}) computes the minimum-cost cooperative route for data transmission from one agent group to another, optimizing total communication energy or latency across multiple overlapping coalitions. The generalized Bellman–Ford procedure (see \cref{alg:bf-d}) extends this to time-varying or uncertain environments where communication costs may fluctuate or even become negative (e.g., due to energy harvesting or incentive mechanisms), while also detecting unstable feedback loops corresponding to negative $d$-cycles.

In summary, the shortest $d$-path problem in weighted simplicial complexes provides a rigorous combinatorial model for cooperative routing in multi-agent systems. By extending classical shortest-path computations to higher-dimensional adjacency, this framework captures collective communication patterns that cannot be modeled by graphs alone. Consequently, the proposed algorithms have potential applications in wireless sensor networks, distributed robotics, and cooperative multi-agent systems where higher-order interactions play a central role.

\section{Conclusion}\label{sec:conclu}
In this study, we have defined the concept of weight in a $d$-complex to facilitate our study of $d$-complex as an object in discrete mathematics and graph theory rather than a topological object. The concept of a $d$-path comes from this naturally. The main contributions of our paper are our novel algorithms to find the shortest path in a weighted $d$-complex. With its vast application in various fields, our algorithms are poised to play a crucial role in the study of complexes as mathematical objects with real-life applications. Our algorithms can be seen as the $d$-complex version of Dijkstra's Algorithm and Bellman-Ford Algorithm for graphs. Analogy of a weighted $d$-complex with a weighted graph opens numerous questions in this field of study. The concepts introduced in this paper pave the path for a line of study in which simplicial complexes will be viewed as geometric computational objects rather than abstract entities.

This work extends beyond just providing a thorough analysis of our algorithms, this also indicates possible integration of simplicial complexes with various concepts of graph theory, specially weighted graphs. 

One future aspect of study is the concept of a spanning $d$-tree, where similar to graphs, the case of absence of a cycle in a connected simplicial complex can be defined to be a $d$-tree. The natural question to ask is, given $d$-complex how can we decide if the given $d$-complex is $d$-tree or not. Furthermore, if a $d$-complex is given, how does one identify a minimal spanning $d$-tree? The mono-dimensional version of these questions are of immense interest in the study of theoretical computer science and graph theory, which make these questions raised above worth pursuing.%The questions that we raised here are beyond the scope of this paper, hence we leave them as future directions.

\section{Declarations}

\subsection*{Funding}
No funding was received for conducting this study.

\subsection*{Conflict of Interest}
The authors declare that they have no potential conflict of interest.

\subsection*{Data Availability}
No datasets were generated or analyzed during the current study.

\bibliography{references}

\end{document}